\newtheorem{theorem}{Theorem}
\newenvironment{proof}[1][Proof.]{\begin{trivlist}
\item[\hskip \labelsep {\bfseries #1}]}{\end{trivlist}}
\newcommand{\AmS}{{\protect\the\textfont2
  A\kern-.1667em\lower.5ex\hbox{M}\kern-.125emS}}
\title{A note on interval edge-colorings of graphs}
\author{R.R. Kamalian\address[MCSD]{Institute for Informatics and Automation Problems,\\
National Academy of Sciences, 0014, Armenia}%
\address{Department of Applied Mathematics and Informatics,\\
Russian-Armenian State University, 0051, Armenia}%
\thanks{email: rrkamalian@yahoo.com.},
        P.A. Petrosyan\addressmark[MCSD]%
\address{Department of Informatics and Applied Mathematics,\\
Yerevan State University, 0025, Armenia}%
\thanks {email: pet\_petros@\{ipia.sci.am, ysu.am, yahoo.com\}}}
\begin{document}

\maketitle

\begin{abstract}
An edge-coloring of a graph $G$ with colors $1,2,\ldots ,t$ is
called an interval $t$-coloring if for each $i\in \{1,2,\ldots,t\}$
there is at least one edge of $G$ colored by $i$, and the colors of
edges incident to any vertex of $G$ are distinct and form an
interval of integers. In this paper we prove that if a connected
graph $G$ with $n$ vertices admits an interval $t$-coloring, then
$t\leq 2n-3$. We also show that if $G$ is a connected $r$-regular
graph with $n$ vertices has an interval $t$-coloring and $n\geq
2r+2$, then this upper bound can be improved to $2n-5$.\\

Keywords: edge-coloring, interval coloring, bipartite graph, regular
graph

\end{abstract}

\bigskip

\section{Introduction}\

All graphs considered in this paper are finite, undirected, and have
no loops or multiple edges. Let $V(G)$ and $E(G)$ denote the sets of
vertices and edges of $G$, respectively. An $(a,b)$-biregular
bipartite graph $G$ is a bipartite graph $G$ with the vertices in
one part all having degree $a$ and the vertices in the other part
all having degree $b$. A partial edge-coloring of $G$ is a coloring
of some of the edges of $G$ such that no two adjacent edges receive
the same color. If $\alpha $ is a partial edge-coloring of $G$ and
$v\in V(G)$, then $S\left( v,\alpha \right)$ denotes the set of
colors of colored edges incident to $v$.

An edge-coloring of a graph $G$ with colors $1,2,\ldots ,t$ is
called an interval $t$-coloring if for each $i\in \{1,2,\ldots,t\}$
there is at least one edge of $G$ colored by $i$, and the colors of
edges incident to any vertex of $G$ are distinct and form an
interval of integers. A graph $G$ is interval colorable, if there is
$t\geq 1$ for which $G$ has an interval $t$-coloring. The set of all
interval colorable graphs is denoted by $\mathfrak{N}$. For a graph
$G\in \mathfrak{N}$, the greatest value of $t$ for which $G$ has an
interval $t$-coloring is denoted by $W\left(G\right)$.

The concept of interval edge-coloring was introduced by Asratian and
Kamalian \cite{b2}. In \cite{b2,b3} they proved the following
theorem.

\begin{theorem}
\label{mytheorem1} If $G$ is a connected triangle-free graph and
$G\in \mathfrak{N}$, then
\begin{center}
$W(G)\leq \vert V(G)\vert -1$.
\end{center}
\end{theorem}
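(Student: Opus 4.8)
The plan is to fix an interval $t$-coloring $\alpha$ of $G$, write $n=|V(G)|$, and show directly that $t\le n-1$; since this holds for every interval coloring it holds in particular for $t=W(G)$. For each vertex $v$ the set $S(v,\alpha)$ is, by hypothesis, an interval of consecutive integers $[a(v),b(v)]$ whose size equals the degree, so $b(v)-a(v)+1=\deg(v)$. Because colors $1$ and $t$ are both used, I can pick a vertex $x_0$ with $a(x_0)=1$ and a vertex $x_p$ with $b(x_p)=t$; the case $t=1$ is trivial, so assume $t\ge 2$ and $x_0\neq x_p$. The first real step is to fix a \emph{shortest} path $P=(x_0,x_1,\dots ,x_p)$ joining them, which exists by connectivity.

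Next I would exploit that consecutive vertices of $P$ are adjacent, so for each $i$ the color of $x_{i-1}x_i$ lies in both $[a(x_{i-1}),b(x_{i-1})]$ and $[a(x_i),b(x_i)]$; hence consecutive intervals overlap. Their union is therefore a single interval, and as it contains $1$ (at $x_0$) and $t$ (at $x_p$) it equals $\{1,\dots ,t\}$. Charging each overlap once gives the coverage inequality
\[
t=\Big|\bigcup_{i=0}^{p}[a(x_i),b(x_i)]\Big|\le \sum_{i=0}^{p}\deg(x_i)-p .
\]
Since $P$ is shortest, the only edges among its vertices are the $p$ path-edges, so $\sum_i\deg(x_i)=2p+F$, where $F$ counts the edges with exactly one endpoint on $P$. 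Substituting turns the bound into $t\le p+F$, that is, $t$ is at most the number of edges meeting $P$.

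The triangle-free hypothesis now enters through the off-path vertices. On a shortest path a vertex $z\notin P$ can be adjacent only to $x_i$'s whose indices lie in a window of three consecutive values (a wider spread would shortcut $P$), and it cannot be adjacent to two consecutive $x_i,x_{i+1}$, since that would close the triangle $x_ix_{i+1}z$. Hence each off-path vertex meets at most two path-vertices, and when it meets two they must be $x_i$ and $x_{i+2}$, closing a $4$-cycle $x_ix_{i+1}x_{i+2}z$. Writing $q$ for the number of off-path vertices that do meet two path-vertices, this yields $F\le (n-p-1)+q$ and therefore $t\le n-1+q$.

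The main obstacle is to eliminate the surplus $q$: I must rule out or absorb the off-path vertices that see two path-vertices. The natural route is to take $P$ to be, among all shortest $x_0$–$x_p$ paths, one that is \emph{locally induced}, meaning no vertex outside $P$ has two neighbors on it; a swap argument (rerouting the geodesic through such a $z$, replacing $x_{i+1}$) should decrease $q$ without lengthening $P$, driving $q$ to $0$ and giving $t\le n-1$. Proving that this reduction terminates—equivalently, that triangle-freeness forces the edges meeting an optimally chosen geodesic to number at most $n-1$—is the delicate point, and it is precisely here that the absence of triangles (rather than the weaker control available in general) is indispensable. I would also dispose separately of degenerate configurations, such as a vertex of very large degree (the star-type extreme), where $t\le n-1$ already follows from $\deg(v)\le n-1$.
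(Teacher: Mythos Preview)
First, note that the paper does not itself prove this theorem: it is quoted from \cite{b2,b3} and then used as a black box in the proofs of Theorem~\ref{mytheorem3} and Theorem~\ref{mytheorem7}. So there is no in-paper argument to compare against, and your proposal has to be assessed on its own merits.

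Your chain of inequalities up through $t\le n-1+q$ is correct. The genuine gap is the final step, and the mechanism you propose for closing it does not work. The swap does \emph{not} reduce $q$: if $z$ lies off $P$ and is adjacent to $x_i$ and $x_{i+2}$, rerouting through $z$ ejects $x_{i+1}$, but $x_{i+1}$ is adjacent to both $x_i$ and $x_{i+2}$, so it becomes a new doubly-attached off-path vertex and $q$ is unchanged. Concretely, take $G=C_4$ on vertices $a,b,c,d$ with the interval $3$-coloring $\alpha(ab)=1$, $\alpha(bc)=2$, $\alpha(cd)=3$, $\alpha(da)=2$; here $n=4$ and $t=3=n-1$. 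With $x_0=a$ and $x_p=c$ the two shortest paths are $a,b,c$ and $a,d,c$, each has $q=1$, and your swap merely toggles between them. No geodesic between \emph{these} endpoints is locally induced, and your bound yields only $t\le 4$. This particular instance can be rescued by picking other endpoints (e.g.\ $x_0=a$, $x_p=d$, which are adjacent, giving $q=0$), but your proposal fixes $x_0$ and $x_p$ before optimizing the path and supplies no argument that an adequate pair always exists. Eliminating $q$ genuinely requires a further idea---either a simultaneous optimization over endpoints and paths, or, as in the original Asratian--Kamalian argument, constructing the path directly from the coloring rather than selecting an arbitrary geodesic---and that idea is missing here.
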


In particular, from this result it follows that if $G$ is a
connected bipartite graph and $G\in \mathfrak{N}$, then $W(G)\leq
\vert V(G)\vert -1$. It is worth noting that for some families of
bipartite graphs this upper bound can be improved. For example, in
\cite{b1} Asratian and Casselgren proved the following

\begin{theorem}
\label{mytheorem2} If $G$ is a connected $(a,b)$-biregular bipartite
graph with $\vert V(G)\vert \geq 2(a+b)$ and $G\in \mathfrak{N}$,
then
\begin{center}
$W(G)\leq \vert V(G)\vert -3$.
\end{center}
\end{theorem}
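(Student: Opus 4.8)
The plan is to take Theorem~\ref{mytheorem1} as the base case and squeeze out two extra units of slack from the ends of the colour range, using biregularity together with the size hypothesis $\vert V(G)\vert\ge 2(a+b)$. Fix an interval $t$-colouring $\alpha$ with $t=W(G)$, and let $X,Y$ be the parts of $G$, where every vertex of $X$ has degree $a$ and every vertex of $Y$ has degree $b$; write $p=\vert X\vert$, $q=\vert Y\vert$, $n=p+q$, so that $ap=bq=\vert E(G)\vert$, and assume without loss of generality that $a\le b$. For a vertex $v$ let $I_v=[l(v),r(v)]$ be its interval of colours; since $\alpha$ is an interval colouring, $I_v$ has length exactly $a$ (if $v\in X$) or $b$ (if $v\in Y$), and for each colour $i$ the edges coloured $i$ form a perfect matching $M_i$ on the set $V_i=\{v:i\in I_v\}$. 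In particular each $N_i:=\vert V_i\vert$ is even and at least $2$. As a first reduction, since $G$ is bipartite it is triangle-free, so Theorem~\ref{mytheorem1} already gives $t\le n-1$; the task is to rule out $t=n-1$ and $t=n-2$.

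First I would record a connectivity fact that drives everything: for every $i\in\{2,\ldots,t\}$ at least one vertex lies in $V_{i-1}\cap V_i$, for otherwise the edges coloured $\le i-1$ and those coloured $\ge i$ would share no vertex and $G$ would be disconnected. Next I would analyse the bottom of the spectrum. Colour $1$ is realised by an edge of $M_1$, whose two endpoints lie one in $X$ and one in $Y$ and both satisfy $l(v)=1$; the $X$-endpoint then occupies exactly $[1,a]$ and the $Y$-endpoint occupies $[1,b]$. The aim is to show that the vertices anchored at this bottom end introduce strictly fewer new colours than their number, producing one unit of slack in $t\le n-1$. This is precisely the place where $n\ge 2(a+b)$ enters: it forces $p\ge 2b$ and $q\ge 2a$, so the bottom vertices have enough distinct potential partners inside $G$ that the low-colour matchings cannot all be degenerate two-vertex classes, and a careful count of $\sum_{i\le a}\bigl(N_i/2-(\text{new vertices})\bigr)$ yields the gain. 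A symmetric argument applied to colour $t$ and the matchings $M_{t-b+1},\ldots,M_t$ yields a second unit of slack.

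Finally I would combine the two ends. Because we may assume $t\ge n-2\ge 2(a+b)-2\ge 2b$, any vertex $v$ anchored at the bottom has $r(v)\le b<t-b+1\le l(w)$ for any vertex $w$ anchored at the top, so the two families of special vertices are disjoint and the two slacks add without overlap, giving $n-t\ge 3$, that is, $t\le n-3$. I expect the main obstacle to be the bottom step (and its mirror at the top): converting the vague statement ``the bottom vertices introduce fewer colours than their count'' into a rigorous inequality. The difficulty is that a colour need not be the left endpoint of any interval, so one cannot simply assign colours to starting vertices; the honest route is to track the sliding window $V_1,V_2,\ldots$ and bound how the matchings $M_i$ force extra vertices of the opposite part to appear, and it is exactly in guaranteeing that these forced partners are genuinely new, rather than reused from the top end, that the hypothesis $\vert V(G)\vert\ge 2(a+b)$ is indispensable.
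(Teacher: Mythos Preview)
The paper does not prove Theorem~\ref{mytheorem2}; it is quoted from the reference~\cite{b1} (Asratian--Casselgren) and used only as a black box in the proof of Theorem~\ref{mytheorem7}. So there is no in-paper argument to compare your proposal against.

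As for the proposal itself: the scaffolding is sound --- reducing to Theorem~\ref{mytheorem1}, isolating the two extremal colour blocks, and using $t\ge 2b$ to separate the bottom-anchored and top-anchored vertex sets is all correct and is indeed how the Asratian--Casselgren argument is organised. The genuine gap is exactly the one you flag yourself. You assert that ``a careful count of $\sum_{i\le a}\bigl(N_i/2-(\text{new vertices})\bigr)$ yields the gain'', but you never define what counts as a new vertex, never derive an inequality from that sum, and never explain why biregularity (as opposed to mere bipartiteness) forces the extra unit. The size hypothesis $p\ge 2b$, $q\ge 2a$ is invoked only heuristically (``enough distinct potential partners''), not in any actual estimate. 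In short, the proposal identifies where the two extra units must come from but does not extract them.

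What the published proof actually does at each end is a path-chasing argument rather than a global count: starting from an edge of colour $1$, one follows a greedily increasing path and uses the fact that both sides have \emph{fixed} degrees $a$ and $b$ to control how fast the colour can rise over the first few steps, producing a strict inequality that the generic triangle-free bound cannot see. Your sliding-window idea could in principle be massaged into this, but as written it stops precisely at the point where the real work begins.
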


For general graphs, Kamalian proved the following

\begin{theorem}
\label{mytheorem3}\cite{b6}. If $G$ is a connected graph and $G\in
\mathfrak{N}$, then
\begin{center}
$W(G)\leq 2\vert V(G)\vert -3$.
\end{center}
\end{theorem}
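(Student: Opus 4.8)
The plan is to work directly with a fixed interval $t$-coloring $\alpha$ of $G$ and to show that the whole colour range $[1,t]$ is threaded along a single path, and then to pay for the colours by counting vertices. Write $n=\vert V(G)\vert$, and for each $v\in V(G)$ record the interval $S(v,\alpha)=[a_v,b_v]$, so that $b_v-a_v+1=\deg_G(v)$; here $a_v$ is the least and $b_v$ the greatest colour met at $v$. Since every colour $1,\ldots,t$ is used, there is a vertex $x$ with $a_x=1$ and a vertex $y$ with $b_y=t$.

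First I would establish a covering step. Choose a shortest $x$–$y$ path $z_0=x,z_1,\dots,z_d=y$ in the connected graph $G$. For each $j$ the edge $z_jz_{j+1}$ carries a colour lying in both $S(z_j,\alpha)$ and $S(z_{j+1},\alpha)$, so consecutive intervals overlap; hence their union $\bigcup_{j=0}^{d}[a_{z_j},b_{z_j}]$ is itself an interval, and as it contains $1=a_x$ and $t=b_y$ it equals $[1,t]$. Consequently $t=\bigl\vert\bigcup_{j=0}^{d}[a_{z_j},b_{z_j}]\bigr\vert$, and moreover $d\le n-1$. This part is routine, using only connectivity and the interval property.

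The substance lies in the counting step, and this is where I expect the real difficulty. Scanning the path from $z_0$ to $z_d$, each vertex $z_j$ contributes to $t$ only those colours of $[a_{z_j},b_{z_j}]$ lying below the current minimum or above the current maximum, i.e.\ a downward and an upward increment. One must show these increments total at most $2n-3$. A naive bound fails, since a single high-degree vertex can contribute an interval of length up to $n-1$; the point is that such a vertex forces many distinct neighbours, so degrees and path length cannot both be large, and there is a global trade-off between interval lengths and the number of vertices available to host the colours. The hard part is to make this trade-off precise: I would try to charge each unit of new colour either to a fresh path vertex or to a private neighbour of $z_j$ forced by its degree, arranging that the two extreme vertices $x,y$ account for the deficit $-3$.

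As an alternative I would consider a monotone-walk approach: starting from $y$ and repeatedly following the edge coloured by the current minimal colour to drive $a$ down toward $1$, while symmetrically pushing a maximum up. The obstacle here is \emph{stalling}: the edge realizing $a_v$ at $v$ may lead to a vertex whose minimal colour is again $a_v$, so no progress is made in one step. This stalling is caused precisely by short cycles, which is exactly why the general bound $2n-3$ is weaker than the triangle-free bound $\vert V(G)\vert-1$ of Theorem~\ref{mytheorem1}; quantifying the cost of a stall is the same obstacle in another guise, and I expect the factor $2$ to emerge from charging at most two colours to each traversed vertex.
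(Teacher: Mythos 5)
Your covering step is correct but it is the easy half: along a shortest $x$--$y$ path the intervals $S(z_j,\alpha)=[a_{z_j},b_{z_j}]$ of consecutive vertices overlap (in the colour of the connecting edge), so their union is the interval $[1,t]$ and $t\le\sum_{j=0}^{d}\deg_G(z_j)-d$. The counting step, which you yourself flag as ``where the real difficulty lies,'' is genuinely missing, and the quantities this setup controls do not reach $2n-3$ (with $n=\vert V(G)\vert$): on a shortest path every vertex of $G$ is adjacent to at most three of the $z_j$ (otherwise the path could be shortcut), so $\sum_{j}\deg_G(z_j)\le 3n-2d-1+d=3n-d-1$ and the best you obtain is $t\le 3n-2d-1$, which beats $2n-3$ only when $d>n/2$ --- and nothing forces the path to be that long. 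The proposed charging scheme (each new colour paid for by a fresh path vertex or a ``private neighbour'') is precisely the content of the theorem and is not supplied; the monotone-walk alternative is likewise abandoned at the point where the stalling obstacle is named. As written, this is a plan in which the key lemma is an acknowledged blank, so it does not constitute a proof.

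For comparison, the paper proves the bound by reduction to Theorem \ref{mytheorem1} rather than by any counting inside $G$. Given an interval $W(G)$-colouring $\alpha$, it forms the connected bipartite graph $H$ on $2n$ vertices consisting of the bipartite double cover of $G$ (edges $u_iw_j$, $u_jw_i$ for each $v_iv_j\in E(G)$) together with the perfect matching $u_iw_i$; it colours $u_iw_j$ and $u_jw_i$ with $\alpha(v_iv_j)+1$, colours $u_iw_i$ with $\max S(v_i,\alpha)+2$, and finally recolours one matching edge $u_{i_0}w_{i_0}$ with $\min S(v_{i_0},\alpha)=1$ by the colour $1$. The result is an interval $(W(G)+2)$-colouring of $H$, so Theorem \ref{mytheorem1} gives $W(G)+2\le\vert V(H)\vert-1=2n-1$, i.e.\ $W(G)\le 2n-3$. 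If you want to rescue a direct argument, it is worth noting that the factor $2$ you hoped would ``emerge from charging at most two colours to each traversed vertex'' appears in the paper simply as the doubling of the vertex set.
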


The upper bound of Theorem \ref{mytheorem3} was improved in
\cite{b5}.

\begin{theorem}
\label{mytheorem4}\cite{b5}. If $G$ is a connected graph with $\vert
V(G)\vert \geq 3$ and $G\in \mathfrak{N}$, then
\begin{center}
$W(G)\leq 2\vert V(G)\vert -4$.
\end{center}
\end{theorem}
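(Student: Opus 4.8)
The plan is to obtain the bound by ruling out the extremal case of Theorem \ref{mytheorem3}. Fix an interval $t$-coloring $\alpha$ of $G$ and, for every vertex $v$, write $a(v)=\min S(v,\alpha)$ and $b(v)=\max S(v,\alpha)$; since the colors at $v$ form an interval of integers, $b(v)-a(v)+1=\deg(v)$. By Theorem \ref{mytheorem3} we already have $t\le 2\vert V(G)\vert-3$, so it suffices to prove that $t=2\vert V(G)\vert-3$ cannot occur when $\vert V(G)\vert\ge 3$. The first thing I would record is the \emph{boundary data}: because colour $1$ is the global minimum, \emph{both} endpoints of any edge coloured $1$ satisfy $a(\cdot)=1$, and symmetrically both endpoints of any edge coloured $t$ satisfy $b(\cdot)=t$. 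Thus every interval coloring automatically exhibits (at least) two vertices ``at the bottom'' and two ``at the top'', a surplus that I intend to convert into the extra unit separating $2\vert V(G)\vert-3$ from $2\vert V(G)\vert-4$.

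The engine of the argument is a path/interval-overlap analysis. Choose a vertex $x$ with $a(x)=1$ and a vertex $y$ with $b(y)=t$, and take a path $P=x=w_0,w_1,\dots,w_k=y$ in $G$. For each edge $w_jw_{j+1}$ its colour lies in both $[a(w_j),b(w_j)]$ and $[a(w_{j+1}),b(w_{j+1})]$, so consecutive intervals overlap; hence their union is again an interval, and as it contains $1$ and $t$ it is exactly $[1,t]$. Tracking how the pair $(a(\cdot),b(\cdot))$ may shift from $w_j$ to $w_{j+1}$ is the mechanism underlying Theorem \ref{mytheorem3}, and here I would push it one step further: assuming $t=2\vert V(G)\vert-3$ forces every inequality in that estimate to hold with equality, which over-determines the coloring. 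Concretely I expect tightness to force $P$ to pass through all vertices, consecutive intervals to overlap in a single colour, and the interval to move by the maximal amount permitted by $\deg(w_j)$ at each internal step; I would extract from this a rigid description of the degrees and of the local structure at the two ends of $P$.

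The contradiction should then come from the paired extreme vertices. In the equality case the second ``bottom'' vertex (the other endpoint of the colour-$1$ edge) and the second ``top'' vertex must also sit on $P$ while carrying the values $a=1$, respectively $b=t$; placing two vertices with $a=1$ and two with $b=t$ consistently with the ``shift is maximal at every step'' constraint forces at least one additional unit into the count and collides with $\vert V(G)\vert\ge 3$, giving $t\le 2\vert V(G)\vert-4$. The main obstacle I anticipate is making the tightness analysis fully rigorous: the per-step shift of the interval is controlled by $\deg(w_j)$, so I must account for high-degree vertices on $P$ and verify that the surplus bottom/top vertices really do cost an extra unit rather than being absorbable. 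The small or degenerate configurations — where the two bottom vertices or the two top vertices coincide, where a single vertex has interval $[1,t]$, or where $P$ is too short — would be disposed of separately using the hypothesis $\vert V(G)\vert\ge 3$, after which the equality case is excluded and the bound follows.
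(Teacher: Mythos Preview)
The paper does not prove this statement: it is quoted from \cite{b5} (Giaro, Kubale, Malafiejski) as background, with no argument supplied here. So there is no proof in the paper to compare your proposal against. Note too that the paper's own device for the weaker bound $W(G)\le 2\vert V(G)\vert-3$ --- passing to a bipartite auxiliary graph $H$ on $2\vert V(G)\vert$ vertices and invoking the bipartite bound $W(H)\le\vert V(H)\vert-1$ --- cannot by itself be pushed to $2\vert V(G)\vert-4$, since that would need $W(H)\le\vert V(H)\vert-2$, one unit stronger than what Theorem~\ref{mytheorem1} provides.

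As for your outline: the strategy of ruling out equality in the $2\vert V(G)\vert-3$ bound via a path/tightness analysis is reasonable in spirit, but what you have written is a plan, not a proof. You explicitly defer the core of the argument to expectations (``I expect tightness to force $P$ to pass through all vertices \dots'') and flag the main difficulty yourself. Concretely, you never write down the inequality you intend to sum along $P$, never justify why equality forces the rigid structure you describe, and never carry out the promised case analysis for the degenerate configurations. Until those steps are executed there is no way to judge whether the paired extreme vertices really buy the extra unit, or whether a high-degree internal vertex on $P$ can absorb the slack.
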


On the other hand, in \cite{b7} Petrosyan proved the following
theorem.

\begin{theorem}
\label{mytheorem4} For any $\varepsilon >0$, there is a graph $G$
such that $G\in \mathfrak{N}$ and
\begin{center}
$W\left(G\right)\geq \left( 2-\varepsilon \right) \left\vert
V\left(G\right)\right\vert$.
\end{center}
\end{theorem}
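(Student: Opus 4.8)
The plan is to realise the upper bound asymptotically by exhibiting, for each $k$, a connected graph $G_k$ together with an explicit interval coloring using $t_k$ colors, where $t_k/\vert V(G_k)\vert \to 2$ as $k\to\infty$. Since producing such a coloring simultaneously shows $G_k\in\mathfrak N$ and gives $W(G_k)\ge t_k$, the theorem follows: given $\varepsilon>0$, choose $k$ so large that $t_k\ge(2-\varepsilon)\vert V(G_k)\vert$, and take $G=G_k$. Note first that the family must be far from bipartite: by Theorem \ref{mytheorem1} any connected triangle-free graph already satisfies $W\le\vert V\vert-1$, a ratio of $1$, so the graphs $G_k$ must contain (many) triangles.

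A natural candidate is a banded graph: let $G_k$ be the $k$-th power of a path, with vertex set $\{1,2,\ldots,N\}$ and $i$ adjacent to $j$ exactly when $1\le\vert i-j\vert\le k$. This graph is connected, simple, and for $k\ge 2$ contains the triangles $\{i,i+1,i+2\}$; an interior vertex has degree $2k$, so in an interval coloring its color set $S(i)$ must be an interval of length $2k$. The idea is to build a staircase coloring in which $S(i)$ climbs steadily, $\min S(i+1)\approx\min S(i)+s$ for a fixed step $s$, so that the largest color used is $t_k\approx sN$. The step $s$ cannot be pushed all the way up to $2$, because the long edge $\{i,i+k\}$ forces $S(i)$ and $S(i+k)$ to overlap; as these intervals are separated by $ks$ and have length $2k$, overlap requires $ks\le 2k-1$, i.e.\ $s\le 2-\tfrac{1}{k}$. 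Choosing $s$ as close to $2-\tfrac{1}{k}$ as possible yields $t_k\approx(2-\tfrac{1}{k})N$, hence $t_k/\vert V(G_k)\vert\to 2$ as $k\to\infty$, which is exactly what is needed.

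I would carry this out in the following order. First, fix $k$ and the step, and write down the intended interval $S(i)$ for every vertex, paying attention to the boundary vertices $1,\ldots,k$ and $N-k+1,\ldots,N$, whose degrees are smaller. Second, assign an actual color to each edge $\{i,j\}$ so that at every vertex the $2k$ incident edges receive exactly the $2k$ colors of $S(i)$, each once, and so that the color of $\{i,j\}$ lies in $S(i)\cap S(j)$. Third, check that every color in $\{1,\ldots,t_k\}$ is realised. Fourth, compute $t_k$ and $\vert V(G_k)\vert=N$ and verify the limit.

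The hard part is the second step: producing a genuinely valid interval coloring rather than merely consistent endpoints. The naive rule ``color $\{i,j\}$ by $i+j$'' is proper but leaves a gap at the centre of each $S(i)$ (the value $2i$ is never used there), so it is not an interval coloring; a more careful, globally consistent assignment of the $2k$ edges at each vertex to the $2k$ colors of its interval is required, and this Latin-square-type matching is where the real combinatorial work lies. A cleaner alternative, which I would pursue if the direct coloring proved unwieldy, is a recursive doubling construction: take two disjoint copies of a well-chosen base graph, shift all colors of the second copy up by the number of colors of the first, and identify a ``top'' vertex set of the first copy with a ``bottom'' vertex set of the second so that the two touching intervals merge into a single interval. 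Merging $b$ vertices per step while the color count doubles can keep the deficiency $2\vert V\vert-t$ bounded, driving the ratio to $2$; here the obstacle is to design the base graph and its boundary so that the identifications neither create multiple edges nor destroy the interval property at the merged vertices.
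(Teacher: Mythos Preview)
The paper does not prove this theorem; it quotes it from \cite{b7}, whose title (``Interval edge-colorings of complete graphs and $n$-dimensional cubes'') already signals the construction: the witnessing graphs are complete graphs. Petrosyan shows there that $W(K_{2^k})$ is within an additive term of order $k$ of $2\cdot 2^{k}$, via a recursive construction that manufactures an interval coloring of $K_{2n}$ from one of $K_{n}$. That is exactly your ``recursive doubling'' alternative, executed on complete graphs; so your second instinct points straight at the actual argument in the cited source.

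Your primary proposal, by contrast, is a plan with the decisive step left blank. The staircase heuristic for $P_N^{k}$ and the necessary constraint $s\le 2-\tfrac1k$ are correct as far as they go, and you rightly observe that the rule $\alpha(\{i,j\})=i+j$ fails because $2i\notin S(i)$. But you then stop, conceding that producing a globally consistent assignment ``is where the real combinatorial work lies'' without doing any of that work. That work \emph{is} the theorem: absent an explicit coloring (or an existence proof for one), you have established neither $G_k\in\mathfrak N$ nor $W(G_k)\ge t_k$. The bound $s\le 2-\tfrac1k$ is merely a necessary condition on a hypothetical staircase coloring, not evidence that one exists; the boundary vertices of $P_N^{k}$ and the Latin-square-type consistency you allude to are precisely where such attempts tend to break down. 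If you want to rescue the path-power route you must actually exhibit and verify $\alpha$; otherwise, develop the recursive construction on complete graphs, which is what \cite{b7} does.
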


For planar graphs, the upper bound of Theorem \ref{mytheorem3} was
improved in \cite{b4}.

\begin{theorem}
\label{mytheorem6}\cite{b4}. If $G$ is a connected planar graph and
$G\in \mathfrak{N}$, then
\begin{center}
$W(G)\leq \frac{11}{6}\vert V(G)\vert$.
\end{center}
\end{theorem}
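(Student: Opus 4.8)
The plan is to fix an interval $W(G)$-coloring $\alpha$ of the connected planar graph $G$, set $t=W(G)$, $n=|V(G)|$, $m=|E(G)|$, and record at each vertex $v$ the set $S(v,\alpha)$, which by hypothesis is an interval $[l_v,r_v]$ with $r_v-l_v+1$ equal to the degree $d_G(v)$. Two identities drive everything: $\sum_v d_G(v)=2m$ (lengths summed over vertices) and $\sum_{i=1}^{t}p(i)=2m$, where $p(i)$ is the number of vertices incident with a color-$i$ edge, so that $p(i)=2|M_i|$ for the matching $M_i$ of color $i$ and $p(i)\ge 2$ because every color is used. For the connected plane graph Euler's formula gives $m\le 3n-6$. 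These facts already yield the crude $2t\le\sum_i p(i)=2m\le 6n-12$, i.e. $t\le 3n-6$, while Theorem~\ref{mytheorem3} gives $t\le 2n-3$ for every connected graph. The essential point is that neither estimate consults the \emph{faces} of an embedding, and lowering the leading constant from $2$ to $\tfrac{11}{6}$ seems to require them.

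Next I would understand how colors are created as $i$ runs from $1$ to $t$. Writing $V_i$ for the set of vertices incident with a color-$i$ edge, connectedness forces, for each $i<t$, at least one \emph{straddling} vertex with $l_v\le i<r_v$; otherwise the colors $\le i$ and the colors $\ge i+1$ would live on two non-adjacent vertex sets, disconnecting $G$. Summing the straddler counts only reproduces the weak $t\le 2m-n+1$, so this alone is hopeless for planar graphs; the sharp linear-in-$n$ behavior is exactly the content of Theorem~\ref{mytheorem3}, and it is insensitive to $m$. To go below $2n$ I would fix a plane embedding and exploit that when $t$ is large, $m$ must be large too (since $t\le m$), so by Euler the embedding has nearly $2n$ faces, the overwhelming majority of them triangles. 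The interval property around a triangular face $xyz$ is rigid: the three edge-colors, each forced to lie inside $[l_x,r_x]$, $[l_y,r_y]$, $[l_z,r_z]$ simultaneously with the consecutiveness at the three vertices, compels extra overlaps among the intervals, i.e. additional active vertices at the colors these faces occupy, beyond the two guaranteed in general.

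I would convert this local surplus into a global bound by discharging on the embedding. Assign each vertex the charge $d_G(v)-6$ and each face $f$ the charge $2\,\ell(f)-6$, where $\ell(f)$ is its facial length; since $\sum_v d_G(v)=\sum_f\ell(f)=2m$ and $V-E+F=2$, the total charge is $6(m-n-F)=-12$. The rules would send charge from faces -- plentiful, and mostly triangular, precisely in the dense regime where the bound is binding -- toward the colors where the interval structure is \emph{thin} (those $i$ with $p(i)=2$, a single color-$i$ edge), paying for the extra overlaps forced above. Since the strong per-color surplus is needed only when $m$ is large, the accounting must be aggregate rather than a uniform lower bound on $p(i)$ (which would be false for sparse graphs such as paths, where $p(i)\equiv 2$). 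Balancing the charges should deliver $6t\le 11n$, that is $W(G)=t\le\tfrac{11}{6}|V(G)|$.

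The main obstacle is exactly this discharging step. The delicate part is designing local rules that are guaranteed never to overspend a face's charge, while ensuring every thin color -- or every maximal run of thin colors -- is paid for by a distinct planar feature, and then checking that triangular faces produce precisely the constant $\tfrac{11}{6}$ and not something weaker; this is where all the genuinely planar input resides. By contrast the counting identities, the straddler lemma, and Euler's formula are routine. I would finally dispose of a handful of degenerate cases -- small $n$, or $G$ not $2$-connected -- separately, so that the discharging may be run on a $2$-connected, suitably dense plane graph.
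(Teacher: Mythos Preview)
The paper does not contain a proof of this statement. Theorem~\ref{mytheorem6} is quoted from Axenovich~\cite{b4} purely as background; the paper's only proofs are the short argument for Theorem~\ref{mytheorem3} via the bipartite double cover and the derivation of Theorem~\ref{mytheorem7} from Theorem~\ref{mytheorem2}. So there is no ``paper's own proof'' to compare your attempt against.

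As for the proposal itself, it is a plan rather than a proof, and the gap you identify is real and unfilled. Up through Euler's formula and the straddler observation everything is correct but, as you say, only recovers $t\le 3n-6$ or $t\le 2n-3$. The decisive step --- sending charge from faces of a plane embedding to ``thin'' colors $i$ with $p(i)=2$ --- is not actually carried out: you have not stated any discharging rules, and it is not clear how one would, since a color class is a global matching with no fixed location in the embedding, while discharging rules must be local. Your heuristic that a triangular face $xyz$ ``compels extra overlaps'' among the intervals $[l_x,r_x],[l_y,r_y],[l_z,r_z]$ is suggestive but not made quantitative, and without a precise surplus per triangle there is nothing to balance. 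In short, the genuinely planar input that would turn $2$ into $\tfrac{11}{6}$ is asserted to exist but never produced, so the argument does not yet establish the bound.
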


In this note we give a short proof of Theorem \ref{mytheorem3} based
on Theorem \ref{mytheorem1}. We also derive a new upper bound for
the greatest possible number of colors in interval edge-colorings of
regular graphs.

\bigskip

\section{Main results}\

\begin{proof}[Proof of Theorem \ref{mytheorem3}.] Let $V(G)=\{v_{1},v_{2},\ldots,v_{n}\}$ and $\alpha$ be an
interval $W(G)$-coloring of the graph $G$. Define an auxiliary graph
$H$ as follows:
\begin{center}
$V(H)=U\cup W$, where
\end{center}
\begin{center}
$U=\{u_{1},u_{2},\ldots,u_{n}\}$, $W=\{w_{1},w_{2},\ldots,w_{n}\}$
and
\end{center}
\begin{center}
$E(H)=\left\{u_{i}w_{j},u_{j}w_{i}|~v_{i}v_{j}\in E(G), 1\leq i\leq
n,1\leq j\leq n\right\}\cup \{u_{i}w_{i}|~1\leq i\leq n\}$.
\end{center}

Clearly, $H$ is a connected bipartite graph with $\vert V(H)\vert =
2\vert V(G)\vert$.\\

Define an edge-coloring $\beta$ of the graph $H$ in the following
way:
\begin{description}
\item[(1)] $\beta (u_{i}w_{j})=\beta (u_{j}w_{i})=\alpha(v_{i}v_{j})+1$ for
every edge $v_{i}v_{j}\in E(G)$,

\item[(2)] $\beta (u_{i}w_{i})=\max S(v_{i},\alpha)+2$ for
$i=1,2,\ldots,n$.
\end{description}

It is easy to see that $\beta$ is an edge-coloring of the graph $H$
with colors $2,3,\ldots,W(G)+2$ and $\min S(u_{i},\beta)=\min
S(w_{i},\beta)$ for $i=1,2,\ldots,n$. Now we present an interval
$(W(G)+2)$-coloring of the graph $H$. For that we take one edge
$u_{i_{0}}w_{i_{0}}$ with $\min S(u_{i_{0}},\beta)=\min
S(w_{i_{0}},\beta)=2$, and recolor it with color $1$. Clearly, such
a coloring is an interval $(W(G)+2)$-coloring of the graph $H$.
Since $H$ is a connected bipartite graph and $H\in \mathfrak{N}$, by
Theorem \ref{mytheorem1}, we have
\begin{center}
$W(G)+2\leq \vert V(H)\vert -1 = 2\vert V(G)\vert-1$, thus
\end{center}
\begin{center}
$W(G)\leq 2\vert V(G)\vert-3$.
\end{center}
~$\square$
\end{proof}

\begin{theorem}
\label{mytheorem7} If $G$ is a connected $r$-regular graph with
$\vert V(G)\vert \geq 2r+2$ and $G\in \mathfrak{N}$, then
\begin{center}
$W(G)\leq 2\vert V(G)\vert -5$.
\end{center}
\end{theorem}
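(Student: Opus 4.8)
The plan is to re-use verbatim the auxiliary construction from the proof of Theorem~\ref{mytheorem3}, and then exploit the extra regularity hypothesis to invoke the stronger bound of Theorem~\ref{mytheorem2} in place of Theorem~\ref{mytheorem1}. So I would begin with an interval $W(G)$-coloring $\alpha$ of $G$, form the bipartite graph $H$ on $U\cup W$ exactly as before, and define the same edge-coloring $\beta$. As established in the proof of Theorem~\ref{mytheorem3}, this yields a connected bipartite graph $H$ with $\vert V(H)\vert = 2\vert V(G)\vert$ that admits an interval $(W(G)+2)$-coloring; in particular $H\in\mathfrak{N}$ and $W(G)+2\leq W(H)$.

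The key new observation is that when $G$ is $r$-regular the graph $H$ is itself biregular. Indeed, a vertex $u_i$ is joined to $w_j$ for each edge $v_iv_j\in E(G)$ --- and there are exactly $\deg_G(v_i)=r$ such neighbours --- together with the single diagonal edge $u_iw_i$, so $\deg_H(u_i)=r+1$; the symmetric count gives $\deg_H(w_i)=r+1$ as well. Hence $H$ is a connected $(r+1,r+1)$-biregular bipartite graph.

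Next I would check that the size hypothesis of Theorem~\ref{mytheorem2} is met. With $a=b=r+1$ that hypothesis reads $\vert V(H)\vert \geq 2(a+b)=4(r+1)$, i.e. $2\vert V(G)\vert \geq 4r+4$, which is precisely the assumption $\vert V(G)\vert \geq 2r+2$. Applying Theorem~\ref{mytheorem2} to $H$ therefore gives $W(H)\leq \vert V(H)\vert -3 = 2\vert V(G)\vert -3$, and combining this with $W(G)+2\leq W(H)$ yields $W(G)\leq 2\vert V(G)\vert -5$, as required.

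The argument is short because the heavy lifting is done by Theorems~\ref{mytheorem1} and \ref{mytheorem2}; the only genuine step is recognizing that $r$-regularity of $G$ forces $(r+1)$-regularity of $H$, and that the threshold $\vert V(G)\vert \geq 2r+2$ translates exactly into the $\vert V(H)\vert \geq 2(a+b)$ requirement of the biregular bound --- so there is no real computational obstacle, only this matching of hypotheses. The one point deserving explicit care is that $H$ really is connected (and simple), so that Theorem~\ref{mytheorem2} genuinely applies: connectivity follows since $u_iw_j,u_jw_j\in E(H)$ for every edge $v_iv_j\in E(G)$ lets one pass from $u_i$ to $u_j$, and each $w_k$ hangs off $u_k$ via $u_kw_k$, so the connectivity of $G$ propagates to $H$.
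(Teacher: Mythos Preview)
Your proof is correct and follows essentially the same route as the paper: build the same auxiliary bipartite graph $H$ with its interval $(W(G)+2)$-coloring, observe that $r$-regularity of $G$ makes $H$ an $(r+1)$-regular bipartite graph with $\vert V(H)\vert=2\vert V(G)\vert\geq 4(r+1)$, and then invoke Theorem~\ref{mytheorem2} in place of Theorem~\ref{mytheorem1}. You are in fact more explicit than the paper in verifying the degree count, the size hypothesis, and the connectivity of $H$.
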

\begin{proof} In a similar way as in the prove of Theorem
\ref{mytheorem3}, we can construct an auxiliary graph $H$ and to
show that this graph has an interval $(W(G)+2)$-coloring. Next,
since $H$ is a connected $(r+1)$-regular bipartite graph with $\vert
V(H)\vert \geq 2(2r+2)$ and $H\in \mathfrak{N}$, by Theorem
\ref{mytheorem2}, we have
\begin{center}
$W(G)+2\leq \vert V(H)\vert -3 = 2\vert V(G)\vert-3$, thus
\end{center}
\begin{center}
$W(G)\leq 2\vert V(G)\vert-5$.
\end{center}
~$\square$
\end{proof}\

\end{document}